
\documentclass[article,plain]{llncs}
\usepackage{graphicx}
\usepackage[english]{babel}
\usepackage[utf8]{inputenc}
\usepackage{hyperref}
\usepackage{wrapfig}
\usepackage{stmaryrd}
\usepackage{amssymb}
\usepackage{amsmath}
\usepackage{amsfonts}
\usepackage{textcomp}
\usepackage[autostyle=true]{csquotes}
\usepackage{orcidlink}

\pagestyle{headings}
\pagenumbering{arabic}

\newcommand{\define}{\triangleq}
\newcommand{\net}{\mathit{N}}
\newcommand{\netn}{\mathcal{N}}
\newcommand\place{\mathsf{P}}
\newcommand\trans{\mathsf{T}}
\newcommand\placen{\mathsf{p}}
\newcommand\transn{\mathsf{t}}
\newcommand\pond{\mathsf{W}}
\newcommand{\postset}[1]{{{#1}{}^{\bullet}}}
\newcommand{\preset}[1]{{{}^{\bullet}\hspace*{-.5mm}{#1}}}
\newcommand\markn{\mathbf{m}}
\newcommand\enab{\mathsf{enab}}
\newcommand{\reach}{\mathbf{RS}}
\newcommand{\limreach}{\lim\!-\mathbf{RS}}
\newcommand\support[1]{[[#1]]}

\begin{document}

\title{Continuous Petri Nets Faithfully Fluidify \\ Most Permissive Boolean Networks}
\author{Stefan Haar\inst{1}\orcidlink{0000-0002-1892-2703} and Juri Kol\v{c}\'ak\inst{2}\orcidlink{0000-0002-9407-9682}} 
\institute{INRIA, France\\
\email{stefan.haar@inria.fr}\and 
Faculty of Technology, Bielefeld University, Germany\\\email{juri.kolcak@uni-bielefeld.de}}
\authorrunning{S. Haar  and J. Kol\v{c}\'ak}

\maketitle
\begin{abstract}
The analysis of biological networks has benefited from the richness of Boolean networks (BNs) and the associated theory.
These results have been further fortified in recent years by the emergence of \emph{Most Permissive (MP)} semantics, combining efficient analysis methods with a greater capacity of explaining pathways to states hitherto thought unreachable, owing to limitations of the classical update modes.
While MPBNs are understood to capture any behaviours that can be observed at a lower level of abstraction, all the way down to continuous refinements, the specifics and potential of the models and analysis, especially attractors, across the abstraction scale remain unexplored.
Here, we fluidify MPBNs by means of \emph{Continuous Petri nets (CPNs)}, a model of (uncountably infinite) dynamic systems that has been successfully explored for modelling and theoretical purposes.
CPNs create a formal link between MPBNs and their continuous dynamical refinements such as ODE models.
The benefits of CPNs extend beyond the model refinement, and constitute well established theory and analysis methods, recently augmented by abstract and symbolic reachability graphs.
These structures are shown to compact the possible behaviours of the system with focus on events which drive the choice of long-term behaviour in which the system eventually stabilises.
The current paper brings an important keystone to this novel methodology for biological networks, namely the proof that extant PN encoding of BNs instantiated as a CPN simulates the MP semantics.
In spite of the underlying dynamics being continuous, the analysis remains in the realm of discrete methods, constituting an extension of all previous work.
\end{abstract}

\section{Introduction}

Boolean networks (BNs) are a powerful and well-established formalism for modelling interacting entities as discrete dynamical systems, enjoying numerous and fruitful applications in live sciences, especially in the area of gene regulation~\cite{WangSA12,deJong02,Martinez-SosaM13,FaureNCT06}.

The choice of  schedule for  variable updates -- should variables be considered to update their values sequentially, or simultaneously, or in some mixture of those -- influences heavily the dynamics of the model.
BNs are known and used with multiple different semantics (updating modes), typically:
\begin{itemize}
    \item fully asynchronous (often simply asynchronous): only one variable, chosen nondeterministically, updates value in an update round;
    \item synchronous: all variables update value simultaneously; and 
    \item (generalised) asynchronous: any nondeterministically chosen subset of variables update value simultaneously.
\end{itemize}

(Generalised) synchronous semantics had long been considered the all-encompassing updating mode of BNs, capable of expressing all possible behaviours.
In~\cite{ManganA03}, however, the authors introduce an element of gene regulation which exhibits behaviours that cannot be explained with asynchronous BNs.
The considered system can be thought of as a \enquote{pulse generator}, visualised below be means of interactions as well as the approximate experimental expression profile of specie $3$.

\begin{center}
    \includegraphics{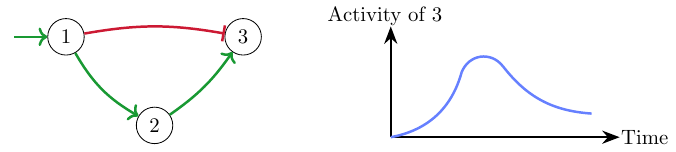}
\end{center}

An incoming signal activating $1$ causes a short term spike in activity of $3$, thus generating a pulse.
In a continuous dynamical model, such as ODE, this behaviour can be modelled using precise timing parameters. The positive influences from $1$ to $2$ and $2$ to $3$ are fast-acting, allowing the activation of $3$, while the negative influence from $1$ to $3$ is slow-acting, only deactivating $3$ after a pulse has been generated.

This behaviour can be equivalently thought of in the terms of activation thresholds.
The positive influences would have low activation thresholds, thus $2$ and subsequently $3$ start activating already at small increase of $1$.
On the other hand, the negative influence inhibiting $3$ has a high activation threshold, and is only enabled once $1$ reaches a much higher concentration.

The abstraction of such activation thresholds implicit in the asynchronous semantics is too coarse-grained to express the desired behaviour.
The only behaviours possible with asynchronous BNs are those  where $3$ never activates due to the inhibition from $1$, or $2$ activates $3$ in spite of that inhibition, resulting in a continuous activation, rather than a pulse.

Most permissive (MP) semantics of BNs have been developed~\cite{PauleveKCH20,ChatainHP18,ChatainHKPT20} to address this challenge.
Instead of updating variable value from $0$ to $1$ and vice-versa directly, MP semantics introduces intermediate dynamic states which can be read, nondeterministically, as either $0$ or $1$ in the outgoing interactions, thus effectively allowing for any combination of activation thresholds.
MPBNs can indeed not only explain the pulse generator at the level of the Boolean abstraction, but are proven to capture all behaviours expressed at a level of any multivalued or continuous refinement~\cite{PauleveKCH20}.

MPBNs in opposition to ODEs are the extremal points of the abstraction scale of dynamical system models.
The exact nature of the space between, threading between discrete and continuous models is, however, largely unexplored.

In this work we hope to open avenues for the study of that middle ground by establishing a connection between MPBNs and continuous Petri nets (CPNs).
CPNs combine the discrete and continuous worlds in terms of discrete time and continuous state space.
Moreover, CPNs are a good candidate for this connection due to the existing links to both ODEs and BNs.
By imposing timing constraints by the means of timed CPNs~\cite{DavidA10}, it becomes possible to fluidify time down to the continuous level, so that CPNs approximate ODE models.
On the other hand, by considering discrete state space, one lands in the well studied domain of discrete Petri nets (DPNs), often referred to simply as Petri nets.

The connection between DPNs and Boolean, or more generally, multivalued discrete networks, and especially encoding of said networks as DPNs, is a well studied topic~\cite{ChaouiyaRT06,ChaouiyaNRT11,ChatainHJPS14,ChatainHKPT20}.
in this work we demonstrate that the encoding of fully asynchronous BNs as safe DPNs used in those works, is also pertinent for encoding MPBNs as CPNs with no additional modifications.

\section{Definitions}
\subsection{Boolean Networks}

A BN of dimension $n\in\mathbb{N}$ (on $n$ variables) is a Boolean function $f\colon \mathbb{B}^n \rightarrow \mathbb{B}^n$, typically treated component-wise as local functions $f_i\colon \mathbb{B}^n \rightarrow \mathbb{B}$ giving the evolution of the variables $i\in\{1,\dots,n\}$, $f=(f_1,\dots, f_n)$.
Assigning a value to each variable constitutes a \emph{configuration} $\mathbf{x}\colon\{1,\dots, n\} \rightarrow\mathbb{B}$, equivalently defined as an $n$-ary Boolean vector, $\mathbf{x}\in\mathbb{B}^n$.

\begin{figure}
    \centering
    \vskip-20pt
    \begin{minipage}{0.25\textwidth}
        \vspace*{5mm}
        $f_1(\mathbf{x})= \neg \mathbf{x}_2$
       
        $f_2(\mathbf{x})= \neg \mathbf{x}_1$
       
        $f_3(\mathbf{x})= \neg \mathbf{x}_1 \land \mathbf{x}_2$.
     \vspace*{5mm}
    \end{minipage}
    \hspace*{0.2\textwidth}
    \begin{minipage}{0.4\textwidth} 
        \vspace*{4mm}
        \includegraphics{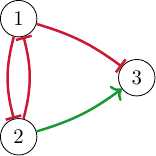}
        \vspace*{4mm}
    \end{minipage}
    \vskip-10pt
    \caption{A running example BN on three variables (left) with its interaction graph (right). Positive interactions are depicted by pointed green arrows while negative interactions are depicted by flat red arrows.}
    \label{fig:running_example}
\end{figure}

Boolean networks are typically visualised by their abstraction in the form of an \emph{interaction graph} (influence graph).
An example BN network defined by local functions and the associated interaction graph are given in Figure~\ref{fig:running_example}.

An interaction graph links variables together based on the interactions between them and their nature (positive vs negative), e.g. the mutual negative interactions between variables $1$ and $2$ in the example BN.
The exact interplay of said interactions, e.g. the \enquote{and} in $f_3$ as opposed to an \enquote{or}, are abstracted away.

For a BN $f$ of dimension $n$ and any two configurations $\mathbf{x},\mathbf{y}\in \mathbb{B}^n$, let $\Delta(\mathbf{x},\mathbf{y})=\{i\in\{1,\dots,n\}\mid \mathbf{x}_i\neq \mathbf{y}_i\}$ be the set of all variables whose value differs between $\mathbf{x}$ and $\mathbf{y}$.
Then the fully asynchronous semantics of $f$, $fa(f)$, synchronous semantics of $f$, $syn(f)$, and asynchronous semantics of $f$, $ga(f)$ are defined as follows for any pair of configurations $\mathbf{x},\mathbf{y}\in\mathbb{B}^n$:
\begin{align*}
    (\mathbf{x},\mathbf{y})\in fa(f)&\Longleftrightarrow\exists i\in\{1,\dots,n\}, \Delta(\mathbf{x},\mathbf{y})=\{i\} \wedge f_i(\mathbf{x})=y_i \\
    (\mathbf{x},\mathbf{y})\in syn(f)&\Longleftrightarrow\forall i\in\{1,\dots,n\},f_i(\mathbf{x})=y_i \\
    (\mathbf{x},\mathbf{y})\in ga(f)&\Longleftrightarrow\forall i\in\Delta(\mathbf{x},\mathbf{y}),f_i(\mathbf{x})=y_i
\end{align*}
It follows immediately that $fa(f)\subseteq ga(f)$ as well as $syn(f) \subseteq ga(f)$.

We use the natural infix notation for the semantics relations, e.g. $\mathbf{x}\xrightarrow[f]{fa} \mathbf{y}$.
We further omit the BN $f$ when its obvious from context, $\mathbf{x}\xrightarrow{fa} \mathbf{y}$.
The associated reachability relations are obtained as the reflexive and transitive closures of the semantics relations, $\mathbf{x}\underset{f}{\overset{fa}{\longrightarrow^*}} \mathbf{y}$, etc.

Configurations of the example BN from Figure~\ref{fig:running_example} reachable from $(0,0,0)$ under fully asynchronous or asynchronous semantics are visualised in Figure~\ref{fig:fa_ts}.
Synchronous semantics are omitted, as the only reachable configurations are $(1,1,0)$ and $(0,0,0)$ itself (the cycle in asynchronous reachability).

\begin{figure}
    \centering
    \includegraphics{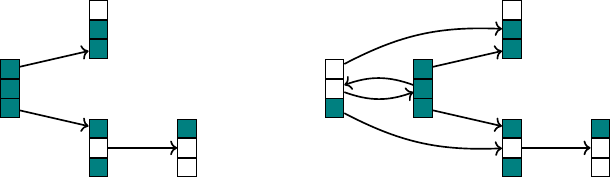}
    \caption{All configurations of the example BN (Figure~\ref{fig:running_example}) reachable from $(0,0,0)$ under fully asynchronous semantics (left) and asynchronous semantics (right). The variables are ordered from top to bottom and dark tiles represent value $0$ while light tiles represent value $1$. Self-loops are omitted.}
    \label{fig:fa_ts}
\end{figure}

Finally, we introduce the MP semantics of BNs.
Different formalisations of the MP semantics have been considered~\cite{PauleveKCH20,PauleveS21}.
To remain conceptually close to state updates of PNs, we recall the original definition~\cite{PauleveKCH20}.
MP differ from other semantics by introduction of two transient (or dynamic) values, $\nearrow$ to signify that a variable is increasing from $0$ to $1$ and $\searrow$ to signify that a variable is decreasing from $1$ to $0$.

For a BN $f$ of dimension $n$ and any MP configuration $\hat{\mathbf{x}}\in\{0,1,\nearrow,\searrow\}^n$, let $\beta(\hat{\mathbf{x}}) = \{\mathbf{x}\in\mathbb{B}^n\mid \forall i\in\{1,\dots,n\}, \hat{\mathbf{x}}_i\in\mathbb{B} \Longrightarrow \mathbf{x}_i=\hat{\mathbf{x}}_i\}$ be the set of compatible binarisations expressed as Boolean configurations.
Then, the MP semantics of a BN $f$, $\hat{mp}(f)$, is defined as follows for any pair of MP configurations $\hat{\mathbf{x}},\hat{\mathbf{y}}\in\{0,1,\nearrow,\searrow\}^n$:
\begin{align*}
    (\hat{\mathbf{x}},\hat{\mathbf{y}})\in \hat{mp}(f)\Longleftrightarrow \exists i\in\{1,\dots,n\}, \Delta(\hat{\mathbf{x}},\hat{\mathbf{y}})=\{i\} \wedge [(\hat{\mathbf{x}}_i=\nearrow \wedge \hat{\mathbf{y}}_i=1)& \vee \\
    (\hat{\mathbf{x}}_i=\searrow \wedge \hat{\mathbf{y}}_i=0)& \vee \\
    (\hat{\mathbf{x}}_i\neq 1\wedge \hat{\mathbf{y}}_i=\nearrow \wedge \exists \mathbf{x}\in\beta(\hat{\mathbf{x}}), f_i(\mathbf{x})=1)& \vee \\
    (\hat{\mathbf{x}}_i\neq 0\wedge \hat{\mathbf{y}}_i=\searrow \wedge \exists \mathbf{x}\in\beta(\hat{\mathbf{x}}), f_i(\mathbf{x})=0)&]
\end{align*}
The top two lines of the disjunction enable a variable in a transient value to collapse into the target Boolean value at any time.
The bottom two lines dictate that for a variable to change to a transient value, a binarisation must exist for which the local function returns the target Boolean value of said transient value.

As we are mostly interested in Boolean configurations, we restrict the MP semantics accordingly, thus for any $\mathbf{x},\mathbf{y}\in\mathbb{B}^n$:
\[
    (\mathbf{x},\mathbf{y}) \in mp(f) \Longleftrightarrow \mathbf{x} \overset{\hat{mp}}{\underset{f}{\longrightarrow^*}}\mathbf{y}
\]

We have $ga(f) \subseteq mp(f)$ as any asynchronous update is reproduced by first putting all relevant variables into transient value and subsequently collapsing them.
The inclusion is strict in general, as shown in Figure~\ref{fig:mp_ts} of all Boolean configurations of the example BN from Figure~\ref{fig:running_example} reachable under MP semantics.

\begin{figure}[th]
    \centering
    \includegraphics{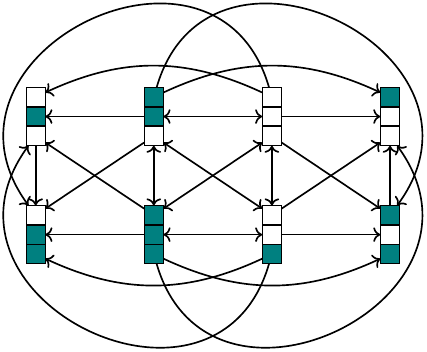}
    \caption{All Boolean configurations of the example BN (Figure~\ref{fig:running_example}) reachable from $(0,0,0)$ under MP semantics. While the fixed points are preserved, all other configurations become transiently reachable, as opposed to asynchronous semantics.}
    \label{fig:mp_ts}
\end{figure}

\subsection{Petri Nets}

\emph{Petri Nets (PNs)} have as skeleton bipartite directed graphs with two sets of nodes, \emph{places} and \emph{transitions}.
Places model resource types available in the system, and can hold a certain amount of \emph{tokens} or \emph{token mass} representing availability of said resource.
Transitions allow removing (reducing) or adding tokens (token mass), according to the \emph{arcs} connecting them to places.

Formally, a \emph{net} is a tuple $\net=(\place,\trans,\pond)$, where $\place$ is a finite set of places, $\trans$ is a finite set of transitions and $\pond\colon (\place\times \trans)\cup(\trans\times \place)\rightarrow \mathbb{N}_0$ is a weight function assigning each transition the amount of resources consumed or produced in each place.
For $x,y\in \place\cup\trans$, we say there is an arc from $x$ to $y$ if $\pond(x,y) > 0$.
A simple example net is visualised in Figure~\ref{fig:re_pn}.

\begin{figure}
    \centering
    \includegraphics{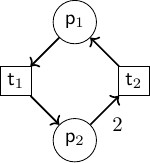} \hspace*{12mm} \includegraphics{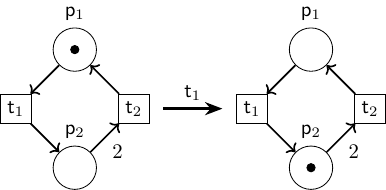}
    \caption{Left: An example net with two places (depicted as circles) and two transitions (depicted as squares). The arc weights are given as labels on the arrows. As per convention, arc weights equal to $1$ are omitted.
    Right: An example of a transition firing in a DPN with the same net structure. The tokens assigned by the markings are depicted as black dots inside the places.}
    \label{fig:re_pn}  \label{fig:dpn_reach}
\end{figure}

For each place or transition $x\in \place\cup \trans$, we call $\preset{x} = \{y\in\place\cup\trans\mid \pond(y,x)>0\}$ the \emph{pre-set} of $x$.
Similarly, $\postset{x} = \{y\in\place\cup\trans\mid \pond(x,y)>0\}$ is the \emph{post-set} of $x$.

\subsubsection{Discrete Petri nets.}
A DPN $\netn = (\net,M_0)$ is a net $\net=(\place,\trans,\pond)$
with a (discrete) \emph{marking} $M_0\colon \place\rightarrow\mathbb{N}_0$, called the initial marking.
A discrete marking models the resource allocation in the system by assigning a natural number of tokens to each place.

A transition $\transn \in\trans$ is \emph{enabled} at a marking $M$, denoted $M\xrightarrow{\transn}$, if and only if $\forall \placen\in\preset{\transn}, M(\placen)\geq \pond(\placen,\transn)$.
An enabled transition can fire, $M\xrightarrow{\transn}M'$, leading to a new marking $M'$ defined for each $\placen\in\place$ as follows:
\[
    M'(\placen) \define M(\placen) - \pond(\placen, \transn) + \pond(\transn,\placen)
\]

Let us consider the example net from Figure~\ref{fig:re_pn} with the initial marking $M_0=\{(\placen_1, 1),(\placen_2, 0)\}$.
Transition $\transn_1$ is enabled at $M_0$, leading to marking $M_1=\{(\placen_1, 0),(\placen_2, 1)\}$ as illustrated in Figure~\ref{fig:dpn_reach} on the right hand side.
$M_1$, however, is a \emph{deadlock}, a marking at which no transition is enabled.

A \emph{finite firing sequence} is a word $w=(\transn_i)_{i< n\in\mathbb{N}_0}$ over $\trans$ for which there exists a sequence of markings $(M_i)_{i< n+1}$ such that $M_i\xrightarrow{\transn_i}M_{i+1}$ for all $i< n$.
We denote the set of markings reachable from some marking $M$ in $\netn$ by $\reach_\netn(M) = \{M'\mid\exists w\in\trans^*, M\xrightarrow{w}M'\}$, or simply $\reach(M)$ where $\netn$ is obvious.

A DPN all of whose reachable markings have at most one token per place, $\forall M\in \reach(M_0), \forall \placen\in\place, M(\placen)\leq 1$, is \emph{safe} (also $1$-bounded).
The PN encoding of BNs used in this paper produces safe DPNs, which explains our interest.
By abuse of notation,  markings of safe DPNs can be seen  as sets, $M=\{\placen\in\place\mid M(\placen)=1\}$.

\subsubsection{Continuous Petri nets.}
A CPN $\netn = (\net,\markn_0)$ is a net $\net=(\place,\trans,\pond)$  with a (continuous) marking $\markn_0\colon \place\to\mathbb{R}_0^+$, called the initial marking.
Similarly to the discrete case, continuous markings also model resource allocation, but assign $\mathbb{R}_0^+$-valued token mass instead of discrete tokens. 
A transition $\transn \in\trans$ has an \emph{enabling degree} at marking $\markn$ given by $\enab(\transn,\markn) = \min_{\placen\in\preset{\transn}}\frac{\markn(\placen)}{\pond(\placen,\transn)}$ when $\preset{\transn}\ne\emptyset$, and equal to $\infty$ otherwise.
A transition with $\enab(\transn,\markn) > 0$ is enabled at $\markn$ and can $\alpha$-fire for any $\alpha \in [0,\enab(\transn,\markn)]\cap \\mathbb{R}$, $\markn\xrightarrow{\alpha\transn}\markn'$, leading to a new marking $\markn'$ defined for each $\placen\in\place$ as follows:
\[
    \markn'(\placen)\define\markn(\placen) - \alpha\pond(\placen, \transn) + \alpha\pond(\transn,\placen)
\]

A finite firing sequence is a finite word $\sigma=(\alpha_i \transn_i)_{i< n\in\mathbb{N}_0}$ over $\mathbb{R}_0^+\times\trans$ for which there exists a finite sequence of markings $(\markn_i)_{i< n+1}$ such that $\markn_i\xrightarrow{\alpha_i\transn_i}\markn_{i+1}$ for all $i< n$.
An infinite firing sequence is an infinite word $\sigma=(\alpha_i t_i)_{n\in\mathbb{N}_0}$ over $\mathbb{R}_0^+\times\trans$ for which there exists an infinite family of markings $(\markn_i)_{i\leq \omega}$ such that $\markn_i\xrightarrow{\alpha_i\transn_i}\markn_{i+1}$ for all $i\in\mathbb{N}_0$ and $\lim_{i\rightarrow \infty} \markn_{i}=\markn_{\omega}$.

We denote the set of markings reachable from some marking $\markn$ in $\netn$ by $\reach_\netn(\markn) = \{\markn'\mid\exists \sigma\in(\mathbb{R}_0^+\times\trans)^*, \markn\xrightarrow{\sigma}\markn'\}$.
We further denote the set of markings reachable in the limit (lim-reachable) from some marking $\markn$ in $\netn$ by $\limreach_\netn(\markn) = \{\markn'\mid\exists \sigma\in(\mathbb{R}_0^+\times\trans)^\omega, \markn\xrightarrow{\sigma}\markn'\}$.
Since an infinite firing sequence can include null firings ($\alpha=0$), we have $\reach_\netn(\markn) \subseteq \limreach_\netn(\markn)$.
We write simply $\reach(\markn)$ and $\limreach(\markn)$ where $\netn$ is obvious.

Similar to DPNs, a CPN all of whose reachable markings have token mass at most $1$ per place, $\forall \markn\in \reach(\markn_0), \forall \placen\in\place, \markn(\placen)\leq 1$, is safe. The notions of safeness and lim-safeness coincide~\cite{RecaldeTS99}.

The ability to fire a transition by varying degree means that there are infinitely many possible firings in any marking at which at least one transition is enabled.
We therefore cannot visualise all the possible transition firings of a CPN in the same fashion as Figure~\ref{fig:dpn_reach} does for a DPN.
Symbolic analysis methods have therefore been introduced  for CPN dynamics, namely abstract reachability graphs (ARG) and symbolic reachability trees (SRT)~\cite{HaarH24}.

The ARG constitutes an abstraction of the state space by collapsing markings that have the same \emph{support}, i.e. the set of places which have non-zero token mass, $\support{\markn} \define \{\placen\in\place\mid \markn(\placen)>0\}$, into equivalence classes.
Supports are linked between one another by two types of edges, \emph{anonymous} and \emph{border}, both representing the possibility of the system evolving into a marking with different support.
Border edges are of special interest, as they represent occurrences that \emph{permanently disable} one or more transitions, thus shaping the reachability landscape.

The SRT provides a refined view at such a landscape in the form of a directed tree linked by border edges.
The nodes of the tree are aggregations of markings by their \emph{mode}, i.e.  the set of transitions eventually fireable from the given marking: $T_\markn \define \{\transn\in\trans\mid \exists \markn\xrightarrow{\sigma}, \transn\in\support{\sigma} \}$, where $\support{\sigma}$ is the set of all transitions occurring along $\sigma$ with $\alpha > 0$.
The branches of the tree then indicate refinements of those modes, as less and less transitions remain available.
The depth of the SRT is naturally bounded by $|\trans|$.

We omit the formal definition of ARG and SRT due to space constraints.
The reader is instead referred to~\cite{HaarH24}.
Both the ARG and SRT of the example net are illustrated in Figure~\ref{fig:re_arg}.

\begin{figure}
    \centering
    \includegraphics[width=\textwidth]{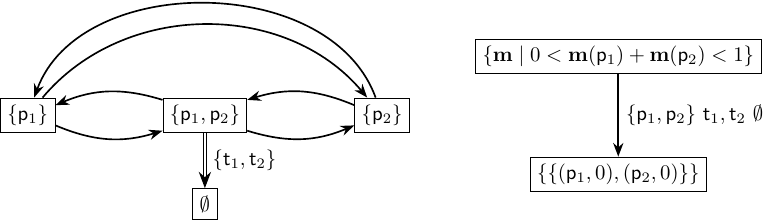}
    \caption{The ARG (left) and the SRT (right) of the example net from Figure~\ref{fig:re_pn}. The border edges are given as double line arrows in the ARG. The SRT for this example is trivial, as there is only one border edge.}
    \label{fig:re_arg}
\end{figure}

Observe that as opposed to the DPN (Figure~\ref{fig:dpn_reach}), $\transn_2$ can be fired in a CPN with the same initial marking $\markn_0=M_0$.
Firing of $\transn_2$ halfs the token mass, every \enquote{loop} around the net thus results in a marking with less total token mass.
This phenomenon is illustrative, as the two places of the net are a trap, that is $\postset{\{\placen_1,\placen_2\}} \subseteq \preset{\{\placen_1,\placen_2\}}$.
It is a well known result for DPNs that an initially marked trap cannot be emptied~\cite{DeselE95}.
This result extends to reachability in CPNs, indeed after any finite number of transition firings, there will always be some token mass left, however, the empty marking is lim-reachable.

\section{Petri Net Encoding of Boolean Networks}

In this section, we introduce the encoding of BNs as safe PNs. We make use of an elegant encoding given in~\cite{ChatainHKPT20}. Due to the focus on concurrency, the original encoding contains so called contextual (or read) arcs. Read arcs can be replaced equivalently with loops, allowing to obtain the same behaviour with an ordinary net structure. We reiterate the definition of~\cite{ChatainHKPT20} without read arcs, but we first require some additional notation for formal reasoning about Boolean functions.

For any variable $i\in\{1,\dots,n\}$ of a BN $f$, $\delta^+(i) = DNF(\neg x_i \wedge f_i(x))$ denotes the set of conjunctive clauses of the disjunctive normal form of the function allowing $i$ to increase value.
Similarly, $\delta^-(i) = DNF(x_i \wedge \neg f_i(x))$ the set of clauses of the function allowing $i$ to decrease value.
Finally, $\delta(i) = \delta^+(i) \cup \delta^-(i)$ is the set of all clauses pertaining to variable $i$.

We say a configuration $\mathbf{y}\in\mathbb{B}^n$ satisfies a clause $C$, $\mathbf{y}\models C$, if for each positive literal $[x_j]\in C$, $\mathbf{y}_j = 1$ and for each negative literal $[\neg x_j]\in C$, $\mathbf{y}_j = 0$.

\begin{definition}[Petri Net Encoding of Boolean Networks]
Given a BN $f$ of dimension $n$, the Petri net encoding of $f$ is the net $\net(f)=(\place,\trans,\pond)$ where $\place \define \{1,\dots, 2n\}$, $\trans\define \bigcup_{i\in \{1,\dots, n\}} \{\transn_C\mid C \in \delta(i)\}$ and $\pond$ is $1$ in all the following cases for every $i\in\{1,\dots,n\}$, and $\pond(x,y)\define0$ otherwise:
\begin{eqnarray}
\label{enc:increase} \forall C\in \delta^+(i), \pond(i,\transn_C)=\pond(\transn_C,i+n)&\define&1 \\
\label{enc:decrease} \forall C\in \delta^-(i), \pond(i+n,\transn_C)=\pond(\transn_C,i)&\define&1 \\
\label{enc:inhibitors} \forall C\in\delta(i), \forall [\neg x_j] \in C, j\neq i, \pond(j,\transn_C)=\pond(\transn_C,j)&\define&1 \\
\label{enc:activators} \forall C\in\delta(i), \forall [x_j] \in C, j\neq i, \pond(j+n,\transn_C)=\pond(\transn_C,j+n)&\define&1
\end{eqnarray}
\label{def:pn_encoding}
\end{definition}
Each variable $i\in\{1,\dots, n\}$ is represented by two places, $i, i+n$, which correspond to variable $i$ valued $0$ and $i$ valued $1$.
The transitions correspond to the clauses of the local functions of the BN.
For each variable $i\in\{1,\dots, n\}$, we use $T^+_i = \{t_C\in T|C\in \delta^+(i)\}$ to denote the transitions that increase the value of $i$. Similarly, $T^-_i = \{t_C\in T|C\in \delta^-(i)\}$ denotes the transitions decreasing value of $i$, and $T_i=T^+_i\cup T^-_i$ are all transitions acting on the variable $i$.

Note that $\net(f)$ is up to exponentially larger than  $f$; however, as the exponent depends only on the maximal in-degree  in the interaction graph, rather than the vertex count, the blow-up is manageable for sparse networks.

Finally, the arcs fall into two categories. The first two lines (\ref{enc:increase},\ref{enc:decrease}) describe variables increasing, respectively decreasing, value as an effect of the associated transition.
The last two lines (\ref{enc:inhibitors},\ref{enc:activators}) describe the loops on places which encode the negative and positive interactions with the target variable.

The above structure guarantees two useful properties.
For each variable $i\in\{1,\dots,n\}$, only transitions in $T_i$ can modify the marking of the places $i, i+n$, and $\{i, i+n\}$ is a P-invariant, meaning that the sum of the tokens (token mass) in $i$ and $i+n$ remains constant across all reachable markings.

Let $\mathbf{x} \in \mathbb{B}^n$ be a configuration of the BN $f$. We define the corresponding discrete marking of $\net(f)$ as $\langle\mathbf{x}\rangle = \left\{i+n\mathbf{x}_i \mid i\in \{1,\dots, n\}\right\}$.
The net encoding of the example BN with initial marking $\langle(0,0,0)\rangle=\{1,2,3\}$ is depicted in Figure~\ref{fig:pn_encoding}.

\begin{figure}
    \centering
    \includegraphics[width=\textwidth]{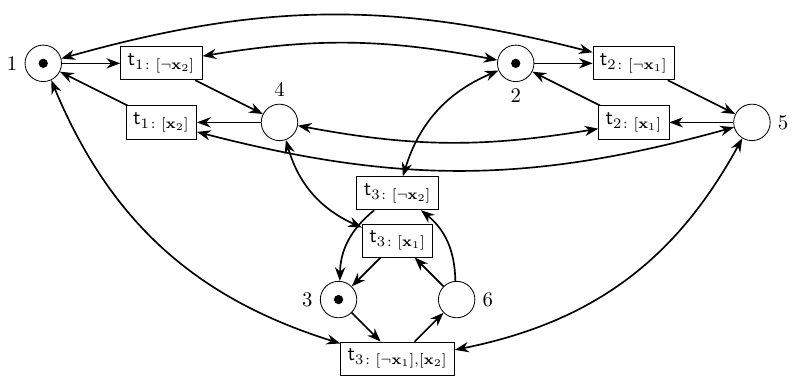}
    \caption{The PN encoding of the example BN given in Figure~\ref{fig:running_example}. The loops are depicted as double headed arrows to avoid clutter.}
    \label{fig:pn_encoding}
\end{figure}

As a DPN, either $\transn_{1\colon[\neg \mathbf{x}_2]}$ to activate $1$ or $\transn_{2\colon[\neg \mathbf{x}_1]}$ to activate 2 can be fired, reaching $\{2,3,4\} = \langle(1,0,0)\rangle$ or $\{1,3,5\}=\langle(0,1,0)\rangle$.
The first case is a deadlock, corresponding to the fixed point of the BN.
In the second case, $\transn_{3\colon[\neg\mathbf{x}-1],[\mathbf{x}_2]}$ can be fired to reach the second deadlock $\{1,5,6\} = \langle(0,1,1)\rangle$, perfectly reproducing the fully asynchronous semantics of the BN (Figure~\ref{fig:fa_ts}).

Observe that the synchronous update reaching configuration $(1,1,0)$ in the asynchronous semantics is not captured by a DPN.
This follows from the fact that $\{1,2\}\subseteq\place$ is an initially marked trap, and therefore cannot be emptied.
Synchronous or asynchronous semantics of BNs thus cannot be reproduced by a DPN without further structural modification or special semantics~\cite{ChatainHKPT20}.
We show that such synchronous updates, as well as any other updates possible under MPBNs are recovered by CPNs.

Theorem 1 of~\cite{ChatainHKPT20} proves that the DPN $\net(f)$ emulates $f$ with fully asynchronous semantics, i.e. any update of the BN is reflected by an enabled transition between the corresponding markings of the DPN and vice-versa, formally:
\[
    \forall \mathbf{x},\mathbf{y} \in \mathbb{B}^n, \mathbf{x} \xrightarrow[f]{fa} \mathbf{y} \Longleftrightarrow \exists \transn\in\trans, \langle \mathbf{x}\rangle \xrightarrow[\net(f)]{\transn} \langle \mathbf{y}\rangle
\]
This equivalence naturally extends to the reachability relation:
\[
    \forall \mathbf{x},\mathbf{y} \in \mathbb{B}^n, \mathbf{x} \underset{f}{\overset{fa}{\longrightarrow^{*}}} \mathbf{y} \Longleftrightarrow \langle \mathbf{y}\rangle \in\reach_{\net(f)}(\langle \mathbf{x}\rangle)
\]

In the following, we show that with the same PN encoding of Definition~\ref{def:pn_encoding} instantiated as a CPN, a reachability equivalence can be established between a BN with the MP semantics and the CPN encoding.
We first establish additional notation.

Since the places of the net encoding are an interval of natural numbers, we treat the continuous markings as vectors, $\markn\in{\mathbb{R}_0^+}^{2n}$.
Let now $\hat{\mathbf{x}} \in \{0,1,\nearrow,\searrow\}^n$ be a configuration of the MPBN $f$.
We define the set of all compatible continuous markings as:
\begin{align*}
\mu(\hat{\mathbf{x}}) \define \{\markn \in {\mathbb{R}_0^+}^{2n}\mid \forall i\in\{1,\dots, n\}&, \markn_i+\markn_{i+n} = 1 \\
&\wedge \hat{\mathbf{x}}_i\in\mathbb{B}\Longrightarrow \markn_i=1-\hat{\mathbf{x}}_i \\
&\wedge \hat{\mathbf{x}}_i\in\{\nearrow,\searrow\}\Longrightarrow 0<\markn_i<1\}
\end{align*}

Of special interest is the marking $\langle\hat{\mathbf{x}}\rangle$ defined as follows for all $i\in\{1,\dots, n\}$:
\[
    {\langle \hat{\mathbf{x}}\rangle}_i, {\langle\hat{\mathbf{x}}\rangle}_{i+n}\define
	\begin{cases}
		1-\hat{\mathbf{x}}_i,\hat{\mathbf{x}}_i &\text{ if } \hat{\mathbf{x}}_i\in\mathbb{B}\\
		0.5,0.5 &\text{ if } \hat{\mathbf{x}}_i\in\{\nearrow,\searrow\}
	\end{cases}
\]

Clearly, $\langle\hat{\mathbf{x}}\rangle\in\mu(\hat{\mathbf{x}})$; for $\hat{\mathbf{x}}\in\mathbb{B}^n$, the definition coincides with the one for the discrete markings, $\langle\hat{\mathbf{x}}\rangle=\langle \mathbf{x}\rangle$ where $\{\mathbf{x}\} = \beta(\hat{\mathbf{x}})$.

\begin{lemma}
Given a BN $f$ of dimension $n$, a configuration $\hat{\mathbf{x}}\in\{0,1,\nearrow,\searrow\}^n$, a transition $\transn\in T_i$ and a marking $\markn\in\mu(\hat{\mathbf{x}})$ such that $\enab(\transn, \markn) > 0$, then for any other marking $\markn'\in\mu(\hat{\mathbf{x}})$, $\enab(\transn, \markn') > 0$.

If $\markn'=\langle\hat{\mathbf{x}}\rangle$, additionally $\enab(\transn, \langle\hat{\mathbf{x}}\rangle) \geq 0.5$.
\label{lem:mark_equivalence}
\end{lemma}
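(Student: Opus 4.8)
The plan is to reduce the statement to one structural observation: for a marking $\markn\in\mu(\hat{\mathbf{x}})$, whether a transition has positive enabling degree depends only on the \emph{support} of $\markn$, and that support is common to every element of $\mu(\hat{\mathbf{x}})$.

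First I would simplify the enabling degree. Since every arc of $\net(f)$ has weight $1$ (Definition~\ref{def:pn_encoding}), for any $\transn$ with $\preset{\transn}\neq\emptyset$ we have $\enab(\transn,\markn)=\min_{\placen\in\preset{\transn}}\markn_{\placen}$. Hence $\enab(\transn,\markn)>0$ holds exactly when $\markn_\placen>0$ for every $\placen\in\preset{\transn}$, i.e.\ precisely when $\preset{\transn}\subseteq\support{\markn}$.

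Next I would compute $\support{\markn}$ for an arbitrary $\markn\in\mu(\hat{\mathbf{x}})$ by a case split on each $\hat{\mathbf{x}}_j$, using only the constraints defining $\mu$. If $\hat{\mathbf{x}}_j=0$ then $\markn_j=1$ and $\markn_{j+n}=0$; if $\hat{\mathbf{x}}_j=1$ then $\markn_j=0$ and $\markn_{j+n}=1$; and if $\hat{\mathbf{x}}_j\in\{\nearrow,\searrow\}$ then $0<\markn_j<1$ together with $\markn_j+\markn_{j+n}=1$ forces both $\markn_j>0$ and $\markn_{j+n}>0$. In every case the membership of $j$ and $j+n$ in $\support{\markn}$ is dictated by $\hat{\mathbf{x}}_j$ alone, so $\support{\markn}$ equals one fixed set $S(\hat{\mathbf{x}})$ independent of the choice of $\markn\in\mu(\hat{\mathbf{x}})$. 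The first claim is then immediate: the hypothesis $\enab(\transn,\markn)>0$ gives $\preset{\transn}\subseteq S(\hat{\mathbf{x}})$, whence $\preset{\transn}\subseteq\support{\markn'}$ and so $\enab(\transn,\markn')>0$ for every other $\markn'\in\mu(\hat{\mathbf{x}})$.

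For the second claim I would evaluate $\langle\hat{\mathbf{x}}\rangle$ on the places of $S(\hat{\mathbf{x}})$: by definition each such place carries mass $1$ (Boolean case) or $0.5$ (transient case), so $\langle\hat{\mathbf{x}}\rangle_\placen\geq 0.5$ for all $\placen\in S(\hat{\mathbf{x}})\supseteq\preset{\transn}$. Since $\transn\in T_i$ is encoded by some clause $C\in\delta(i)$, line~(\ref{enc:increase}) or~(\ref{enc:decrease}) always places $i$ or $i+n$ into $\preset{\transn}$, so $\preset{\transn}\neq\emptyset$ and the minimum formula applies, yielding $\enab(\transn,\langle\hat{\mathbf{x}}\rangle)=\min_{\placen\in\preset{\transn}}\langle\hat{\mathbf{x}}\rangle_\placen\geq 0.5$. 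I expect the only points needing care to be the support computation — in particular verifying that transient values populate \emph{both} associated places while Boolean values leave the complementary place empty — together with the small check that $\preset{\transn}$ is never empty, so the enabling-degree minimum is well defined rather than $\infty$.
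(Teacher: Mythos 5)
Your proof is correct and takes essentially the same approach as the paper: both hinge on the observation that every marking in $\mu(\hat{\mathbf{x}})$ has the same support, so $\enab(\transn,\cdot)>0$ depends only on $\hat{\mathbf{x}}$, and that every nonzero entry of $\langle\hat{\mathbf{x}}\rangle$ is at least $0.5$ by definition. The only difference is that you spell out details the paper treats as immediate (the case analysis establishing the common support, the unit arc weights, and $\preset{\transn}\neq\emptyset$).
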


\begin{proof}
For any two markings $\markn, \markn'\in\mu(\hat{\mathbf{x}})$ and for each place $\placen\in\place$, it holds that $\markn_{\placen} = 0 \Longleftrightarrow \markn'_{\placen} = 0$. Since $\placen\in\preset{\transn}$ implies $\markn_{\placen}>0$, necessarily also $\markn'_{\placen} > 0$.

For $\langle\hat{\mathbf{x}}\rangle$, we have ${\langle\hat{\mathbf{x}}\rangle}_{\placen}\neq 0 \Longrightarrow {\langle\hat{\mathbf{x}}\rangle}_{\placen} \geq 0.5$ by definition.
\end{proof}

\begin{lemma}
Given a BN $f$ of dimension $n$, a transition $\transn\in T^+_i$ ($\transn\in T^-_i$) and a marking $\markn\in{\mathbb{R}_0^+}^{2n}$ such that $\enab(\transn,\markn) > 0$, then for any $\hat{\mathbf{x}}\in\{0,1,\nearrow,\searrow\}^n$ such that $\markn\in\mu(\hat{\mathbf{x}})$ and $\hat{\mathbf{x}}_i\in\{0,\searrow\}$ ($\hat{\mathbf{x}}_i\in\{1,\nearrow\}$), $\hat{\mathbf{x}}\xrightarrow[f]{\hat{mp}}\hat{\mathbf{y}}$ where $\hat{\mathbf{y}}_i=\nearrow$ ($\hat{\mathbf{y}}_i=\searrow$).
\label{lem:enab_transfers_to_bn}
\end{lemma}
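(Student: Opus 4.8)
The plan is to carry out the case $\transn\in T^+_i$ in full and note that $\transn\in T^-_i$ is entirely symmetric (swap the two places of each variable, $\delta^+$ with $\delta^-$, and $\nearrow$ with $\searrow$). So write $\transn=\transn_C$ for the clause $C\in\delta^+(i)$ that defines it. By definition $\delta^+(i)=DNF(\neg x_i\wedge f_i(x))$, hence every Boolean configuration $\mathbf{x}$ with $\mathbf{x}\models C$ satisfies $f_i(\mathbf{x})=1$; this implication is the property I intend to exploit.

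To produce the claimed MP step I would take $\hat{\mathbf{y}}$ equal to $\hat{\mathbf{x}}$ on every coordinate except $\hat{\mathbf{y}}_i=\nearrow$. Since $\hat{\mathbf{x}}_i\in\{0,\searrow\}$ we have $\hat{\mathbf{x}}_i\neq\nearrow$, so $\Delta(\hat{\mathbf{x}},\hat{\mathbf{y}})=\{i\}$, and moreover $\hat{\mathbf{x}}_i\neq 1$. Thus the third disjunct of the definition of $\hat{mp}(f)$ applies as soon as I exhibit a binarisation $\mathbf{x}\in\beta(\hat{\mathbf{x}})$ with $f_i(\mathbf{x})=1$, and by the previous remark it suffices to find $\mathbf{x}\in\beta(\hat{\mathbf{x}})$ with $\mathbf{x}\models C$.

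The heart of the argument is to read $\preset{\transn_C}$ off Definition~\ref{def:pn_encoding} and to match the enabling condition against $\mu(\hat{\mathbf{x}})$. Because $\enab(\transn_C,\markn)>0$, every place of $\preset{\transn_C}$ carries positive mass in $\markn$; and lines~(\ref{enc:increase}),~(\ref{enc:inhibitors}),~(\ref{enc:activators}) show that $\preset{\transn_C}$ is exactly the place $i$ (contributed by the literal $\neg x_i$ inherent in $C$ via line~(\ref{enc:increase})), the place $j$ for each negative literal $[\neg x_j]\in C$ with $j\neq i$, and the place $j+n$ for each positive literal $[x_j]\in C$ with $j\neq i$. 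I would then fix, for each variable $j$ occurring in $C$, the value $\mathbf{x}_j$ demanded by its literal and verify compatibility with $\beta(\hat{\mathbf{x}})$. For $j=i$ the literal $\neg x_i$ demands $\mathbf{x}_i=0$, which is admissible since $\hat{\mathbf{x}}_i\in\{0,\searrow\}$. For $j\neq i$: if $\hat{\mathbf{x}}_j$ is transient the choice is free; if $\hat{\mathbf{x}}_j\in\mathbb{B}$, the positive mass forced on the relevant pre-set place, together with the P-invariant $\markn_j+\markn_{j+n}=1$ and the constraints of $\mu(\hat{\mathbf{x}})$, pins $\hat{\mathbf{x}}_j$ to precisely the value the literal requires (a negative literal forces $\markn_j>0$, whence $\hat{\mathbf{x}}_j\neq 1$, i.e. $\hat{\mathbf{x}}_j=0$; a positive literal forces $\markn_{j+n}>0$, whence $\hat{\mathbf{x}}_j\neq 0$, i.e. $\hat{\mathbf{x}}_j=1$). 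Completing $\mathbf{x}$ arbitrarily within $\beta(\hat{\mathbf{x}})$ on the remaining coordinates yields $\mathbf{x}\in\beta(\hat{\mathbf{x}})$ with $\mathbf{x}\models C$, hence $f_i(\mathbf{x})=1$, which justifies the MP step.

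The main obstacle is exactly this last compatibility check: one must rule out that enabling $\transn_C$ ever forces a \emph{Boolean} coordinate of $\hat{\mathbf{x}}$ to contradict the literal of $C$ on that variable. Here the structure of the encoding pays off, since the loop arcs of lines~(\ref{enc:inhibitors}) and~(\ref{enc:activators}) put exactly the ``$0$-side'' place $j$ or the ``$1$-side'' place $j+n$ into the pre-set, so that the enabling requirement and the $\mu(\hat{\mathbf{x}})$ constraints line up automatically. The remaining bookkeeping (the symmetric $T^-_i$ case via the fourth disjunct, and the membership $\Delta(\hat{\mathbf{x}},\hat{\mathbf{y}})=\{i\}$) is routine.
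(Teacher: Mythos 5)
Your proposal is correct and follows essentially the same route as the paper's proof: reduce to the $\transn\in T^+_i$ case by symmetry, use $\enab(\transn,\markn)>0$ to force positive mass on each pre-set place, translate this via the constraints of $\mu(\hat{\mathbf{x}})$ into admissible values for each literal's variable, and conclude that a binarisation $\mathbf{z}\in\beta(\hat{\mathbf{x}})$ satisfying the clause exists, which licenses the MP step. Your version is merely more explicit than the paper's on the routine checks (that $\Delta(\hat{\mathbf{x}},\hat{\mathbf{y}})=\{i\}$, that $\hat{\mathbf{x}}_i\neq 1$, and the handling of the pre-set place $i$ itself), which the paper leaves implicit.
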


\begin{proof}
Since the cases for $\transn\in T_i^+$ and $\transn\in T_i^-$ are symmetrical, we only show the proof for $\transn\in T_i^+$.
Let  $C(\transn)\in\delta^+(i)$ be the clause corresponding to $\transn$.

Let $[\neg \mathbf{x}_j]\in C(\transn)$ be an arbitrary negative literal. Since $\enab(\transn,\markn) > 0$, $\markn_j > 0$ and by $\markn\in\mu(\hat{\mathbf{x}})$, $\hat{\mathbf{x}}_j\in\{0,\nearrow,\searrow\}$.
Similarly, let $[\mathbf{x}_j]\in C(\transn)$ be an arbitrary positive literal. Again from $\enab(\transn,\markn) > 0$, $\markn_{i+n} > 0$ and by $\markn\in\mu(\hat{\mathbf{x}})$, $\hat{\mathbf{x}}_j\in\{1,\nearrow,\searrow\}$.

As such, there must exist a configuration $\mathbf{z} \in \beta(\hat{\mathbf{x}})$ such that $\mathbf{z}\models C(\transn)$, and thus variable $i$ is allowed to increase value, $\hat{\mathbf{x}}\xrightarrow[f]{\hat{mp}}\hat{\mathbf{y}}$.
\end{proof}

\begin{lemma}
Given a BN $f$ of dimension $n$ and two configurations $\hat{\mathbf{x}},\hat{\mathbf{y}}\in\{0,1,\nearrow,\searrow\}^n$ such that for any $i\in\{1,\dots,n\}$, ${\hat{\mathbf{x}}}_i\in \mathbb{B} \Longrightarrow {\hat{\mathbf{y}}}_i={\hat{\mathbf{x}}}_i$, then for any transition $\transn \in \trans$, $\enab(\transn,\langle\hat{\mathbf{y}}\rangle) \geq 0.5 \Longrightarrow \enab(\transn, \langle\hat{\mathbf{x}}\rangle) \geq 0.5$.
\label{lem:enab_monotony}
\end{lemma}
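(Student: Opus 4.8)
The plan is to reduce the claim to a purely structural statement about which places carry positive mass, exploiting that in the encoding $\net(f)$ every arc weight equals $1$ and that the markings $\langle\hat{\mathbf{x}}\rangle$ and $\langle\hat{\mathbf{y}}\rangle$ take values only in $\{0,0.5,1\}$. First I would observe that, since $\pond(\placen,\transn)=1$ for every $\placen\in\preset{\transn}$, the enabling degree simplifies to $\enab(\transn,\markn)=\min_{\placen\in\preset{\transn}}\markn_{\placen}$. Because $\langle\hat{\mathbf{x}}\rangle_{\placen}\in\{0,0.5,1\}$ for every place $\placen$ (and likewise for $\langle\hat{\mathbf{y}}\rangle$), the condition $\enab(\transn,\langle\hat{\mathbf{x}}\rangle)\geq 0.5$ is equivalent to $\langle\hat{\mathbf{x}}\rangle_{\placen}>0$ for all $\placen\in\preset{\transn}$, and similarly for $\hat{\mathbf{y}}$. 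The lemma thus becomes: if every preset place of $\transn$ is positively marked under $\langle\hat{\mathbf{y}}\rangle$, the same holds under $\langle\hat{\mathbf{x}}\rangle$.

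Next I would fix an arbitrary place $\placen\in\preset{\transn}$ pertaining to some variable $j$ (that is, $\placen=j$ or $\placen=j+n$) and establish $\langle\hat{\mathbf{x}}\rangle_{\placen}>0$ by a case split on $\hat{\mathbf{x}}_j$. If $\hat{\mathbf{x}}_j\in\{\nearrow,\searrow\}$, then by definition $\langle\hat{\mathbf{x}}\rangle_j=\langle\hat{\mathbf{x}}\rangle_{j+n}=0.5>0$, so the place is positive irrespective of $\hat{\mathbf{y}}$. If instead $\hat{\mathbf{x}}_j\in\mathbb{B}$, the hypothesis relating $\hat{\mathbf{x}}$ and $\hat{\mathbf{y}}$ forces $\hat{\mathbf{y}}_j=\hat{\mathbf{x}}_j$, so the two configurations encode variable $j$ identically and $\langle\hat{\mathbf{x}}\rangle_{\placen}=\langle\hat{\mathbf{y}}\rangle_{\placen}$; positivity of this place then transfers directly from the assumption $\enab(\transn,\langle\hat{\mathbf{y}}\rangle)\geq 0.5$. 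In both cases $\langle\hat{\mathbf{x}}\rangle_{\placen}>0$, hence $\geq 0.5$, and taking the minimum over $\preset{\transn}$ yields $\enab(\transn,\langle\hat{\mathbf{x}}\rangle)\geq 0.5$.

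The argument carries no real computational content, so I do not expect a genuine obstacle; the single point requiring care is the asymmetry of the hypothesis, which constrains $\hat{\mathbf{y}}$ only where $\hat{\mathbf{x}}$ is Boolean. Consequently the two markings need not agree on the transient coordinates of $\hat{\mathbf{x}}$, and it is essential that those coordinates contribute the automatically positive value $0.5$ under $\langle\hat{\mathbf{x}}\rangle$, so that no appeal to $\hat{\mathbf{y}}$ is made there. Getting this direction right — transferring positivity from $\hat{\mathbf{y}}$ to $\hat{\mathbf{x}}$ only on Boolean coordinates, while handling transient coordinates of $\hat{\mathbf{x}}$ independently — is the one place where a careless case split could go wrong.
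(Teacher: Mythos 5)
Your proof is correct and follows essentially the same route as the paper's: fix an arbitrary preset place, identify its variable $j$, and split on whether $\hat{\mathbf{x}}_j$ is transient (giving $0.5$ by definition of $\langle\cdot\rangle$) or Boolean (where the hypothesis forces $\hat{\mathbf{y}}_j=\hat{\mathbf{x}}_j$ and the place values coincide, transferring the bound from $\langle\hat{\mathbf{y}}\rangle$). The preliminary reduction to positivity of preset places, using unit arc weights and values in $\{0,0.5,1\}$, is implicit in the paper but makes the same argument.
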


\begin{proof}
Let $\placen \in \preset{\transn}$ be arbitrary and let $j = \placen$, respectivelly $j = \placen - n$ for $\placen>n$, be the BN variable whose value assignment $\placen$ represents.
We now conduct discussion on the nature of the value of $j$ in $\hat{\mathbf{x}}$.

The case of $\hat{\mathbf{x}}_j\in\{\nearrow,\searrow\}$ is simple as we have ${\langle\hat{\mathbf{x}}\rangle}_\placen= 0.5$ by definition.

Let thus $\hat{\mathbf{x}}_j\in\mathbb{B}$.
By assumption, ${\hat{\mathbf{x}}}_j\in \mathbb{B} \Longrightarrow \hat{\mathbf{x}}_j=\hat{\mathbf{y}}_j$.
Since $\enab(\transn,\langle\hat{\mathbf{y}}\rangle) \geq 0.5$, we necessarily have ${\langle\hat{\mathbf{y}}\rangle}_\placen = 1 = {\langle\hat{\mathbf{x}}\rangle}_\placen$.
\end{proof}

\begin{lemma}
Given a BN $f$ of dimension $n$ and configurations $\hat{\mathbf{x}}\xrightarrow[f]{\hat{mp}} \hat{\mathbf{y}}$ such that $\hat{\mathbf{y}}_i = \nearrow$ ($\hat{\mathbf{y}}_i = \searrow$), where $i=\Delta(\hat{\mathbf{x}},\hat{\mathbf{y}})$, then there exists a transition $\transn\in T^+_i$ ($\transn\in T^-_i$) such that $\enab(\transn, \langle\hat{\mathbf{x}}\rangle) \geq 0.5$.
\label{lem:half_enabled}
\end{lemma}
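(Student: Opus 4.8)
The plan is to treat the $\nearrow$ case in detail and invoke the evident symmetry (swapping the roles of places $i$ and $i+n$, and of $\delta^+$ and $\delta^-$) for the $\searrow$ case. First I would read off which disjunct of $\hat{mp}(f)$ produced the step $\hat{\mathbf{x}}\xrightarrow{\hat{mp}}\hat{\mathbf{y}}$. Since $\hat{\mathbf{y}}_i=\nearrow$ together with $\Delta(\hat{\mathbf{x}},\hat{\mathbf{y}})=\{i\}$ forces $\hat{\mathbf{x}}_i\neq\nearrow$, only the third line of the disjunction can apply, so I obtain $\hat{\mathbf{x}}_i\neq 1$ (hence $\hat{\mathbf{x}}_i\in\{0,\searrow\}$) together with a binarisation $\mathbf{z}\in\beta(\hat{\mathbf{x}})$ satisfying $f_i(\mathbf{z})=1$.

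The heart of the argument is to convert this witness into a clause of $\delta^+(i)$ that is enabled at $\langle\hat{\mathbf{x}}\rangle$. Recall $\delta^+(i)=DNF(\neg x_i\wedge f_i)$, so I need a binarisation $\mathbf{w}\in\beta(\hat{\mathbf{x}})$ with $\mathbf{w}_i=0$ and $f_i(\mathbf{w})=1$: such a $\mathbf{w}$ models $\neg x_i\wedge f_i$ and therefore satisfies some clause $C\in\delta^+(i)$, and I set $\transn=\transn_C\in T^+_i$. If $\hat{\mathbf{x}}_i=0$ the witness $\mathbf{z}$ already has $\mathbf{z}_i=0$ and I take $\mathbf{w}=\mathbf{z}$. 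If $\hat{\mathbf{x}}_i=\searrow$, coordinate $i$ is free in $\beta(\hat{\mathbf{x}})$, so I would try $\mathbf{w}=\mathbf{z}[\,\mathbf{z}_i:=0\,]$; any clause of $f_i$ witnessing $f_i(\mathbf{z})=1$ that does not rely on the positive literal $x_i$ survives this flip and yields the required $C$.

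Once the clause $C$ is fixed, the enabledness check is mechanical. By the encoding (Definition~\ref{def:pn_encoding}), $\preset{\transn_C}=\{i\}\cup\{j\mid[\neg x_j]\in C,\ j\neq i\}\cup\{j+n\mid[x_j]\in C,\ j\neq i\}$, and every arc weight is $1$, so $\enab(\transn_C,\langle\hat{\mathbf{x}}\rangle)=\min_{\placen\in\preset{\transn_C}}\langle\hat{\mathbf{x}}\rangle_{\placen}$. For place $i$, $\hat{\mathbf{x}}_i\neq 1$ gives $\langle\hat{\mathbf{x}}\rangle_i\geq 0.5$. For each place $j$ coming from a negative literal, $\mathbf{w}\models\neg x_j$ means $\mathbf{w}_j=0$, whence $\mathbf{w}\in\beta(\hat{\mathbf{x}})$ forces $\hat{\mathbf{x}}_j\neq 1$ and thus $\langle\hat{\mathbf{x}}\rangle_j\geq 0.5$; symmetrically, each place $j+n$ from a positive literal gives $\hat{\mathbf{x}}_j\neq 0$ and $\langle\hat{\mathbf{x}}\rangle_{j+n}\geq 0.5$. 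The minimum is therefore at least $0.5$, as claimed.

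The step I expect to be the main obstacle is exactly the conversion in the second paragraph: the MP condition only guarantees \emph{some} binarisation with $f_i=1$, whereas $\delta^+(i)$ demands one that additionally sets $x_i=0$. For $\hat{\mathbf{x}}_i=\searrow$ I must argue that the witness can be chosen (or flipped) to value $0$ at $i$ without destroying $f_i=1$; this is immediate unless every satisfied clause of $f_i$ relies on the positive occurrence of $x_i$, a degenerate situation (e.g.\ $f_i=x_i$, for which $\delta^+(i)=\emptyset$). I would therefore verify that the hypothesis $\Delta(\hat{\mathbf{x}},\hat{\mathbf{y}})=\{i\}$, which already rules out $\hat{\mathbf{x}}_i=\nearrow$, together with the structure of admissible MP steps is enough to exclude this case, and otherwise appeal to the intended absence of such positive self-dependence in the encoded networks.
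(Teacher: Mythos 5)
Your positive argument is the paper's own proof: the same symmetry reduction, the same extraction of a binarisation witness from the third disjunct of $\hat{mp}(f)$, the same passage to a clause $C\in\delta^+(i)$ with its transition $\transn_C\in T^+_i$, and the same place-by-place verification that every place in $\preset{\transn_C}$ carries mass at least $0.5$ in $\langle\hat{\mathbf{x}}\rangle$. Where you are more careful than the paper is precisely the step you single out as the main obstacle, and you are right to worry: the paper's proof simply writes that $f_i(\mathbf{z})=1$ \enquote{and therefore} some clause $C\in\delta^+(i)$ satisfies $\mathbf{z}\models C$, which is exactly the unjustified conversion, since $\delta^+(i)=DNF(\neg x_i\wedge f_i(x))$ demands a witness setting $x_i=0$ and nothing enforces this when $\hat{\mathbf{x}}_i=\searrow$.

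Moreover, the fallback you propose cannot be carried out: the hypotheses of the lemma do not exclude the degenerate case, and the paper nowhere assumes absence of positive self-dependence. Take $n=1$ and $f_1(\mathbf{x})=\mathbf{x}_1$, with $\hat{\mathbf{x}}=(\searrow)$ and $\hat{\mathbf{y}}=(\nearrow)$. Then $\Delta(\hat{\mathbf{x}},\hat{\mathbf{y}})=\{1\}$, $\hat{\mathbf{x}}_1\neq 1$, and the binarisation $\mathbf{z}=(1)\in\beta(\hat{\mathbf{x}})$ gives $f_1(\mathbf{z})=1$, so $\hat{\mathbf{x}}\xrightarrow[f]{\hat{mp}}\hat{\mathbf{y}}$ is a legal MP step; yet $\neg x_1\wedge f_1(x)$ is unsatisfiable, hence $\delta^+(1)=\emptyset$ and $T^+_1=\emptyset$, so no transition with the claimed property exists. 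The lemma as stated is therefore false, and no proof can close the case you left open; it becomes true (by your first case, which needs no flipping) under the extra hypothesis $\hat{\mathbf{x}}_i\in\mathbb{B}$, or more generally whenever some binarisation $\mathbf{w}\in\beta(\hat{\mathbf{x}})$ has both $\mathbf{w}_i=0$ and $f_i(\mathbf{w})=1$. That restricted version is all that Corollary~\ref{cor:abstraction_one_to_one} needs, but note that the proof of Theorem~\ref{thm:cpn_simulate_mpbn} also invokes Lemma~\ref{lem:half_enabled} in its second phase, on configurations whose $i$-th component is transient, so the restriction does not automatically cover that use and any repair has to be threaded through there as well.
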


\begin{proof}
Similarly to Lemma~\ref{lem:enab_transfers_to_bn}, the cases $\hat{\mathbf{y}}_i = \nearrow$ and $\hat{\mathbf{y}}_i = \searrow$ are symmetrical. We thus conduct the proof only for $\hat{\mathbf{y}}_i = \nearrow$.

Since $\hat{\mathbf{x}}\xrightarrow[f]{\hat{mp}} \hat{\mathbf{y}}$, there must exist a binarisation $\mathbf{z}\in \beta(\hat{\mathbf{x}})$ such that $f_i(\mathbf{z})=1$ and therefore a clause $C \in \delta^+(i)$, such that $\mathbf{z} \models C$.
Let $\transn_C \in T^+_i$ be the transition corresponding to $C$ as per Definition~\ref{def:pn_encoding}.

We know $\hat{\mathbf{x}}_i \in \{0, \searrow\}$ and thus ${\langle\hat{\mathbf{x}}\rangle}_i\geq 0.5$.
Let then $j\in\{1,\dots,n\}, j\neq i$ be an arbitrary variable such that $j\in\preset{\transn_C}$.
By Definition~\ref{def:pn_encoding}, $[\neg x_j] \in C$.
Then by $\mathbf{z}\models C$, $\mathbf{z}_j = 0$ and since $\mathbf{z}\in \beta(\hat{\mathbf{x}})$, $\hat{\mathbf{x}}_j\in\{0,\nearrow,\searrow\}$ and ${\langle\hat{\mathbf{x}}\rangle}_j \geq 0.5$.

Similarly, for $j\in\{1,\dots,n\}, j\neq i$ such that $j+n\in\preset{\transn_C}$, we have $[x_j]\in C$.
By $\mathbf{z}\models C$, $\mathbf{z}_j = 1$ and since $\mathbf{z}\in \beta(\hat{\mathbf{x}})$, $\hat{\mathbf{x}}_j\in\{1,\nearrow,\searrow\}$ and ${\langle\hat{\mathbf{x}}\rangle}_{j+n} \geq 0.5$.
\end{proof}

\begin{corollary}
Given a BN $f$ of dimension $n$ and configurations $\hat{\mathbf{x}}\xrightarrow[f]{\hat{mp}} \hat{\mathbf{y}}$ such that $\hat{\mathbf{x}}_i \in \mathbb{B}$, where $i=\Delta(\hat{\mathbf{x}},\hat{\mathbf{y}})$, then there exists $\transn\in T_i$ such that $\langle \hat{\mathbf{x}}\rangle \xrightarrow[\net(f)]{0.5\transn} \langle\hat{\mathbf{y}}\rangle$.
\label{cor:abstraction_one_to_one}
\end{corollary}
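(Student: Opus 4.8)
The plan is to reduce the statement to Lemma~\ref{lem:half_enabled} together with a short computation of the firing effect. First I would unfold the MP semantics under the hypothesis $\hat{\mathbf{x}}_i\in\mathbb{B}$: the first two disjuncts of $\hat{mp}(f)$ require $\hat{\mathbf{x}}_i\in\{\nearrow,\searrow\}$ and are therefore excluded, so the step $\hat{\mathbf{x}}\xrightarrow[f]{\hat{mp}}\hat{\mathbf{y}}$ must be an instance of one of the last two disjuncts. This forces $\hat{\mathbf{y}}_i$ to be transient, concretely $\hat{\mathbf{x}}_i=0,\hat{\mathbf{y}}_i=\nearrow$ or $\hat{\mathbf{x}}_i=1,\hat{\mathbf{y}}_i=\searrow$. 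As these two cases are symmetric, I would carry out only the increasing one.

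With $\hat{\mathbf{y}}_i=\nearrow$ in hand, the premises of Lemma~\ref{lem:half_enabled} are exactly met, so I would invoke it to obtain a transition $\transn\in T_i^+$ with $\enab(\transn,\langle\hat{\mathbf{x}}\rangle)\geq 0.5$. This enabling degree is precisely what licenses a $0.5$-firing of $\transn$ at $\langle\hat{\mathbf{x}}\rangle$, so the remaining task is to verify that the resulting marking equals $\langle\hat{\mathbf{y}}\rangle$.

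For this last step I would write $\transn=\transn_C$ with $C\in\delta^+(i)$ and read off from Definition~\ref{def:pn_encoding} that the only arcs of $\transn_C$ with unequal in- and out-weight are the pair $\pond(i,\transn_C)=\pond(\transn_C,i+n)=1$; every other incident place lies on a loop (encoding a literal of $C$) and is thus left invariant under firing. Hence a $0.5$-firing transfers exactly $0.5$ of token mass from place $i$ to place $i+n$ and touches nothing else. Starting from $\langle\hat{\mathbf{x}}\rangle$, where $\hat{\mathbf{x}}_i=0$ gives ${\langle\hat{\mathbf{x}}\rangle}_i=1$ and ${\langle\hat{\mathbf{x}}\rangle}_{i+n}=0$, I reach $0.5$ in both places, matching the defining values ${\langle\hat{\mathbf{y}}\rangle}_i={\langle\hat{\mathbf{y}}\rangle}_{i+n}=0.5$ for $\hat{\mathbf{y}}_i=\nearrow$. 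Because $\Delta(\hat{\mathbf{x}},\hat{\mathbf{y}})=\{i\}$, the two configurations agree on every $j\neq i$, so the corresponding places carry identical mass in $\langle\hat{\mathbf{x}}\rangle$ and $\langle\hat{\mathbf{y}}\rangle$ and are untouched by the firing.

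I do not anticipate a genuine obstacle here, since the corollary is essentially Lemma~\ref{lem:half_enabled} packaged with one concrete firing. The two points that need care are the case analysis that identifies $\hat{\mathbf{y}}_i$ as a transient value, so that Lemma~\ref{lem:half_enabled} becomes applicable, and the bookkeeping showing the loop places produce no net change, which is what makes a single $0.5$-firing land on $\langle\hat{\mathbf{y}}\rangle$ exactly rather than merely on some marking in $\mu(\hat{\mathbf{y}})$.
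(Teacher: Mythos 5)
Your proposal is correct and takes essentially the same route as the paper, which simply remarks that the corollary is a direct application of Lemma~\ref{lem:half_enabled}. Your case analysis identifying $\hat{\mathbf{y}}_i$ as transient and the explicit bookkeeping showing that the $0.5$-firing moves exactly $0.5$ of mass from place $i$ to place $i+n$ (leaving loop places and all other variables' places untouched) are precisely the details the paper leaves implicit.
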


Corollary~\ref{cor:abstraction_one_to_one} is a direct application of Lemma~\ref{lem:half_enabled}: a value change from Boolean to transient in the MPBN is  directly reproducible in the CPN encoding.

\begin{theorem}[Continuous Petri Nets Simulate MP Boolean Networks]
Let $f$ be a Boolean network of dimension $n$.
Then, for any configurations $\mathbf{x},\mathbf{y} \in \mathbb{B}^n$, $\mathbf{x} \underset{f}{\overset{mp}{\longrightarrow^{*}}} \mathbf{y} \Longleftrightarrow \langle \mathbf{y}\rangle\in \limreach_{\net(f)}(\langle \mathbf{x}\rangle)$.

\label{thm:cpn_simulate_mpbn}
\end{theorem}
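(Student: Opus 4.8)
The plan is to prove both implications by the same two--phase scheme: a \emph{fluidification} phase that drives every relevant variable to the balanced transient marking (mass $0.5$ in both of its places), followed by a \emph{collapse} phase that pushes those variables onto their target Boolean values. The two engines are Lemma~\ref{lem:enab_monotony}, which lets me transport half--enabledness from the configuration produced by a witness to the intermediate canonical marking actually reached in the simulation, and lim--reachability, which is what makes the collapse possible when traps are present. Throughout I use that a transition of $T_i$ changes only the two places of $i$ while its clause literals sit on \emph{loop} places that are never consumed, so that a transiently marked variable (mass $0.5$ in both places) satisfies every literal and hence can only help enabledness.

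For $\Rightarrow$, let $\mathbf{x}=\hat{\mathbf{z}}^0\xrightarrow{\hat{mp}}\dots\xrightarrow{\hat{mp}}\hat{\mathbf{z}}^k=\mathbf{y}$ and let $U$ be the variables that take a transient value somewhere along it. Fluidify by treating the $i\in U$ in the order in which they first turn transient: at that MP step Lemma~\ref{lem:half_enabled} (equivalently Corollary~\ref{cor:abstraction_one_to_one}) supplies a transition half--enabled at $\langle\hat{\mathbf{z}}^{l}\rangle$, and since the marking built so far keeps Boolean exactly the still--untouched variables, which are frozen at their $\mathbf{x}$--value and therefore agree with $\hat{\mathbf{z}}^{l}$ there, Lemma~\ref{lem:enab_monotony} transfers the half--enabledness and I may fire it by $0.5$. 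This reaches $\langle\hat{\mathbf{w}}\rangle$ with $\hat{\mathbf{w}}_i$ transient for $i\in U$ and $\hat{\mathbf{w}}_i=\mathbf{x}_i=\mathbf{y}_i$ otherwise. For the collapse, Lemma~\ref{lem:half_enabled} applied to the \emph{last} step turning $i$ transient, again transported by Lemma~\ref{lem:enab_monotony}, half--enables a $T_i^{\pm}$ transition at $\langle\hat{\mathbf{w}}\rangle$; firing all of them in a round--robin schedule with geometrically shrinking amounts summing to $0.5$ each keeps every place strictly positive at each finite step yet sends every place to its Boolean target, so $\markn_\omega=\langle\mathbf{y}\rangle$.

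For $\Leftarrow$, take $\sigma$ with $\markn_0=\langle\mathbf{x}\rangle$ and $\lim_i\markn_i=\langle\mathbf{y}\rangle$, let $U$ be the variables some transition of $\sigma$ acts on (those outside $U$ are frozen, so already $\mathbf{y}_j=\mathbf{x}_j$), and for $i\in U$ let $i^{*}$ be the first firing time of a $T_i$ transition $\transn_i$. Processing $U$ in increasing $i^{*}$ order I would turn each $i$ transient via Lemma~\ref{lem:enab_transfers_to_bn}; the point is that $\transn_i$ is enabled at the current canonical marking, because its own consumed place is positive (its sign is forced to match $\mathbf{x}_i$ by enabledness at the first $T_i$--firing), every guard on a frozen variable is already satisfied at $\mathbf{x}$, and any guard violated at $\mathbf{x}$ sits on a variable that moved before $i^{*}$ and is hence earlier in the order and already transient. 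Reaching an all--transient $\hat{\mathbf{a}}$, I then set each direction to match $\mathbf{y}$ --- a variable turned $\nearrow$ but needing $0$ is flipped to $\searrow$ through a $T_i^{-}$ transition of $\sigma$, which is half--enabled at $\hat{\mathbf{a}}$ since all of $U$ now carries mass $0.5$ and its frozen guards hold at $\mathbf{x}$ --- and finally collapse every transient variable, a move always allowed by the top two lines of the MP definition.

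The hard part, and the reason lim--reachability rather than plain reachability is forced, is the collapse of $\Rightarrow$ in the presence of traps: for two mutually inhibiting variables (the running example) neither of the two ``zero--value'' places can be emptied in finitely many firings, since each is a loop guard of the other's collapse transition, so only the interleaved, infinitely refined schedule reaches $\langle\mathbf{y}\rangle$, and one must verify that the partial token masses stay bounded away from $0$ at every finite stage while converging. The second delicate point, shared by both directions, is that a transition justifying a transient move must be shown enabled at the intermediate canonical marking and not merely at the configuration produced by the witness; the ``first--turned--transient'' order in $\Rightarrow$ and the $i^{*}$--order in $\Leftarrow$ are chosen precisely so that Lemma~\ref{lem:enab_monotony} bridges exactly this gap.
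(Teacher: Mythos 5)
Your proof is correct and shares the paper's overall architecture: fluidify every touched variable to mass $0.5$ in both of its places, select one half-enabled transition per touched variable whose sign matches its target value in $\mathbf{y}$, and collapse by an infinite round-robin firing with geometrically shrinking amounts (the paper's discounted sequence $(2^{-n}\sigma)_{n\in\mathbb{N}}$, needed for exactly the trap reason you give); conversely, induct over touched variables in order of first $T_i$-firing, turn each transient via Lemma~\ref{lem:enab_transfers_to_bn}, repair wrong directions using oppositely-signed transitions of $\sigma$, then collapse. The one genuine divergence is in \enquote{$\Longrightarrow$}: the paper invokes the canonical three-phase form of MP witnesses from~\cite{PauleveKCH20} (Boolean-to-transient, then transient-to-transient, then collapse, each variable moving at most once per phase), which makes fluidification a verbatim application of Corollary~\ref{cor:abstraction_one_to_one} and confines direction changes to the second phase; you instead process an \emph{arbitrary} witness through your \enquote{first-turn} and \enquote{last-turn} orderings, each time bridging to the simulated marking with Lemma~\ref{lem:enab_monotony}. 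Your variant costs a little bookkeeping --- it silently uses that a variable staying Boolean along a prefix of the witness still holds its $\mathbf{x}$-value, which is immediate since MP steps never change a Boolean value directly --- but it buys self-containedness, as no external structural theorem about MP witnesses is needed. Likewise, in \enquote{$\Longleftarrow$} you inline Lemma~\ref{lem:mark_equivalence} as direct reasoning on guard places instead of passing through a compatible intermediate configuration $\hat{\mathbf{v}}$; same content. Two of your assertions deserve one supporting line each: that a variable turned $\nearrow$ but needing $\mathbf{y}_i=0$ has some $T_i^-$ transition actually occurring in $\sigma$ (this follows from the P-invariant: the mass moved out of place $i$ by the first firing must flow back for the limit to be $\langle\mathbf{y}\rangle$, and only $T_i^-$ transitions can move it), and that the round-robin schedule remains enabled at every finite stage (at round $m$ every involved place still carries at least $0.5\cdot 2^{-(m-1)}$ while the next firing amount is $0.5\cdot 2^{-m}$); the paper spells out the second point explicitly, whereas your sketch only asserts it.
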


\begin{proof}
We first show the \enquote{$\Longrightarrow$} direction: the CPN retains all possible behaviours of the MPBN.

For any MP reachability, $\mathbf{x} \overset{mp}{\longrightarrow^{*}} \mathbf{y}$, there exists a witness $\mathbf{x}=\hat{\mathbf{x}}_0\xrightarrow{\hat{mp}} \hat{\mathbf{x}}_1 \xrightarrow{\hat{mp}} \dots\xrightarrow{\hat{mp}} \hat{\mathbf{x}}_k=\mathbf{y}$ of specific structure~\cite{PauleveKCH20}.
In particular, it can be split into three phases.
The first phase, $\mathbf{x}=\hat{\mathbf{x}}_0\xrightarrow{\hat{mp}} \dots\xrightarrow{\hat{mp}} \hat{\mathbf{x}}_h=\hat{\mathbf{z}}$, consists only of variables changing value from Boolean to transient.
In the second phase, $\hat{\mathbf{z}}=\hat{\mathbf{x}}_h\xrightarrow{\hat{mp}} \dots\xrightarrow{\hat{mp}} \hat{\mathbf{x}}_{h'}=\hat{\mathbf{z}}'$, all variables change value only between the two transient values and finally, in the third phase, $\hat{\mathbf{z}}'=\hat{\mathbf{x}}_{h'}\xrightarrow{\hat{mp}} \dots\xrightarrow{\hat{mp}} \hat{\mathbf{x}}_k=\mathbf{y}$, all variables collapse back into a Boolean value.
Additionally, each variable changes value at most once in each of the phases.

The first phase, $\mathbf{x}\xrightarrow{\hat{mp}} \dots\xrightarrow{\hat{mp}} \hat{\mathbf{z}}$, fits the conditions of Corollary~\ref{cor:abstraction_one_to_one}, immediately yielding $\langle  \mathbf{x}\rangle\rightarrow\dots\rightarrow\langle \hat{\mathbf{z}}\rangle$.
It remains to show
$\langle \mathbf{y} \rangle\in\limreach_{\net(f)}(\langle \hat{\mathbf{z}}\rangle)$.

We know that for any variable $i\in\{1,\dots, n\}$, $\hat{\mathbf{z}}_i \in \mathbb{B}$ implies $\hat{\mathbf{z}}_i = \mathbf{y}_i$, and thus ${\langle\hat{\mathbf{z}}\rangle}_i={\langle \mathbf{y}\rangle}_i$ and ${\langle\hat{\mathbf{z}}\rangle}_{i+n}={\langle \mathbf{y}\rangle}_{i+n}$.
There is therefore no need to fire transitions in $T_i$ and we only concern ourselves with variables having transient values in $\hat{\mathbf{z}}$.

We now show that for every such variable $i \in \{1,\dots, n\}$, $\hat{\mathbf{z}}_i \in \{\nearrow,\searrow\}$, there exists a transition $\transn\in T_i$ exactly $0.5$ enabled in $\hat{\mathbf{z}}$, and such that the marking $\langle \hat{\mathbf{z}}\rangle\xrightarrow{0.5\transn} \markn$ agrees with the target configuration $\mathbf{y}$ of the variable $i$, $\markn_i = 1 - \mathbf{y}_i$.

Let us first assume $\hat{\mathbf{z}}_i=\hat{\mathbf{z}}'_i$, i.e. the variable $i$ does not change value in the second phase.
Since $\hat{\mathbf{z}}_i \in \{\nearrow,\searrow\}$, we know there exists a unique transition $\transn\in T_i$ such that $\langle\hat{\mathbf{w}}\rangle\xrightarrow{0.5\transn}\langle\hat{\mathbf{w}}'\rangle$ occurs somewhere along the first phase, $\langle  \mathbf{x}\rangle\rightarrow\dots\rightarrow\langle \hat{\mathbf{z}}\rangle$.
For any variable $j\in \{1,\dots, n\}$, $\hat{\mathbf{z}}_j\in \mathbb{B}$ implies $\hat{\mathbf{z}}_j=\hat{\mathbf{w}}_j$.
We can thus apply Lemma~\ref{lem:enab_monotony} to obtain $\enab(\transn, \langle \hat{\mathbf{z}} \rangle) \geq 0.5$.
Finally, we know that by firing $\transn$ fully, the value of $i$ becomes flipped, $\mathbf{x}_i = \markn_i$.
$\hat{\mathbf{z}}_i=\hat{\mathbf{z}}'_i$ necessarily means $\mathbf{y}_i \neq \mathbf{x}_i$ and thus $\mathbf{x}_i = \markn_i = 1 - \mathbf{y}_i$.

Let now $\hat{\mathbf{z}}_i\neq\hat{\textbf{z}}'_i$ and let $\hat{\mathbf{v}} = \hat{\mathbf{x}}_g \xrightarrow{\hat{mp}} \hat{\mathbf{x}}_{g+1} = \hat{\mathbf{v}}'$, $h \leq g < h'$ be the configurations such that $\Delta(\hat{\textbf{v}},\hat{\mathbf{v}}') = \{i\}$.
By Lemma~\ref{lem:half_enabled}, there exists a transition $\transn\in T_i$ such that $\enab(\transn, \langle \hat{v}\rangle) \geq 0.5$.
Since variables are only changing value between transient ones along the second phase, we get $\langle \hat{\mathbf{v}}\rangle = \langle \hat{\mathbf{z}}\rangle$ and thus $\enab(\transn, \langle \hat{\mathbf{z}}\rangle) \geq 0.5$.
Let us now, without loss of generality, assume $\hat{\mathbf{v}}'_i = \nearrow$ and thus also $\transn\in T_i^+$.
Then we have $\markn_i = 0$ and since the variable $i$ does not change value again in phase two, only collapses to a Boolean value in phase three, $\mathbf{y}_i = 1$, or alternatively $1 - \mathbf{y}_i = 0 = \markn_i$.

We thus have a set of transitions $S \subseteq \trans$ such that no two transitions act on the same variable, $\forall i\in\{1,\dots, n\}, \transn\neq\transn'\in T_i \Longrightarrow \{\transn,\transn'\}\not\subseteq S$, and every variable with transient value in $\hat{\mathbf{z}}$ is represented, $\forall  i\in\{1,\dots, n\}, \hat{\mathbf{z}}_i\in{\{\nearrow,\searrow\}} \Longleftrightarrow T_i\cap S\neq \emptyset$.
Moreover, every transition in $S$ is $0.5$ enabled in $\langle\hat{\mathbf{z}}\rangle$, $\forall\transn\in S, \enab(\transn, \langle\hat{\mathbf{z}}\rangle)=0.5$.
Let thus $\sigma = (0.5\transn)_{\transn \in S}$ be a sequence of all such transitions firing in arbitrary order.
If $\sigma$ were a valid firing sequence, we would immediately get $\langle\hat{\mathbf{z}}\rangle\xrightarrow{\sigma}\langle \mathbf{y}\rangle$, however, $0.5$ firing a transition in $S$ might completely disable other transitions, as it empties a place.
Consider thus $\frac{\sigma}{2}$.
After firing any transition by $0.25$, each place is guaranteed to remain at least $0.25$ marked, thus guaranteeing that all other transitions are at least $0.25$ enabled. $\frac{\sigma}{2}$ is therefore a firing sequence.
We can repeat this reasoning to construct a discounted firing sequence which reaches $\langle \mathbf{y}\rangle$ in the limit, $\langle\hat{\mathbf{z}}\rangle\xrightarrow{(2^{-n}\sigma)_{n\in\mathbb{N}}}\langle \mathbf{y}\rangle$, concluding this proof direction.

We now show the \enquote{$\Longleftarrow$} direction: the CPN does not allow any more behaviours than the MPBN.

Let $\langle \mathbf{x}\rangle \xrightarrow[\net(f)]{\sigma} \langle \mathbf{y}\rangle$ be a (infinite) firing sequence witnessing the reachability of $\langle \mathbf{y}\rangle$ from $\langle\mathbf{x}\rangle$.
We can without loss of generality assume $\sigma$ includes no null firings.
We define a set of variables
\[
I \define\Delta(\mathbf{x}, \mathbf{y})\cup \left\{i\in\{1,\dots, n\}\mid\exists \langle\mathbf{x}\rangle \xrightarrow{\sigma_1} \markn \xrightarrow{\sigma_2} \langle \mathbf{y}\rangle, \sigma_1\sigma_2 = \sigma \wedge \markn_i \not\in \mathbb{B}\right\}
\]
as the variables whose value (marking of the corresponding places) changes at some point along $\sigma$.
We further impose a total order on $I$ according to the associated places of which variable have their value modified first, resulting in the vector $\overrightarrow{\bf{I}} = (j_0,\dots, j_k)$.
The order is total as every transition modifies only two places $i, i+n$, which are associated to the same variable $i\in\{1, \dots, n\}$.

We now show that the MPBN $f$ can reach a configuration $\hat{\mathbf{z}}$ such that for every $i\in I$, $\hat{\mathbf{z}}_i\in\{\nearrow,\searrow\}$.
We conduct the proof by induction on $k$.

In the base case, we have to show $\mathbf{x} \xrightarrow{\hat{mp}} \hat{\mathbf{w}}$ where $\hat{\mathbf{w}}_{j_0} \in \{\nearrow,\searrow\}$ and $\forall i\neq j_0, \hat{\mathbf{w}}_i\in\mathbb{B}$.
This follows directly from Lemma~\ref{lem:enab_transfers_to_bn}, as $\langle \mathbf{x}\rangle \in \mu(\mathbf{x})$.

Let now $h< k$ be such that $\mathbf{x}\overset{\hat{mp}}{\longrightarrow^*} \hat{\mathbf{w}}$ and $\forall i\in \{1,\dots,n\}$, $\hat{\mathbf{x}}_i\in\mathbb{B}\Longleftrightarrow i\not\in\{j_0,\dots,j_h\}$.
Let then $\langle \mathbf{x}\rangle \xrightarrow{\sigma_1} \markn \xrightarrow{\alpha\transn} \markn'\xrightarrow{\sigma_2} \langle \mathbf{y}\rangle$ be such that $\sigma_1(\alpha\transn)\sigma_2=\sigma$, $\support{\sigma_1}\subseteq\bigcup_{i=0}^h T_i$ and $\transn\in T_{h+1}$.
Such markings and transition have to exist as per definition of $\overrightarrow{\bf{I}}$.

Let further $\hat{\mathbf{v}}\in\{0,1,\nearrow,\searrow\}^n$ be arbitrary such that $\markn\in\mu(\hat{\mathbf{v}})$.
We know such a configuration exists as the P-invariants of $\net(f)$ guarantee that $\markn_i + \markn_{i+n} = 1$ is preserved for any variable $i\in\{1,\dots, n\}$.
By Lemma~\ref{lem:mark_equivalence}, we thus immediately have $\enab(\transn,\langle\hat{\mathbf{v}}\rangle) \geq 0.5$.
$\support{\sigma_1}\subseteq\bigcup_{i=0}^h T_i$ guarantees that for any variable $i\in\{1,\dots,n\}$, $\hat{\mathbf{w}}_i\in\mathbb{B} \Longrightarrow \hat{\mathbf{v}}_i=\hat{\mathbf{w}}_i$, as all such variables are untouched since the initial configuration, allowing us to apply Lemma~\ref{lem:enab_monotony} to obtain $\enab(\transn,\langle\hat{\mathbf{w}}\rangle) \geq 0.5$.
Finally, by Lemma~\ref{lem:enab_transfers_to_bn} we get $\hat{\mathbf{w}}\xrightarrow{\hat{mp}}\hat{\mathbf{w}}'$ with $\hat{\mathbf{w}}'_{j_{h+1}}\in\{\nearrow,\searrow\}$.

We thus have $\mathbf{x}\overset{\hat{mp}}{\longrightarrow^*} \hat{\mathbf{z}}$ such that $\hat{\mathbf{z}}_i\in\{\nearrow,\searrow\}\Longleftrightarrow i\in I$.
Additionally, since $\Delta(\mathbf{x},\mathbf{y})\subseteq I$, we also know $\hat{\mathbf{z}}_i \in \mathbb{B} \Longleftrightarrow \hat{\mathbf{z}}_i=\mathbf{y}_i$.
Let us now consider the variables $i\in\{1,\dots, n\}$ such that $\hat{\mathbf{z}}_i = \searrow$ and $\mathbf{y}_i = 1$, respectively $\hat{\mathbf{z}}_i = \nearrow$ and $\mathbf{y}_i = 0$.
As the two cases are symmetrical, we assume $\mathbf{y}_i = 1$ without loss of generality.

Since $i\in I$ we know that there exists $\langle \mathbf{x}\rangle\xrightarrow{\sigma_1}\markn\xrightarrow{\alpha\transn}\markn'\xrightarrow{\sigma_2} \langle \mathbf{y}\rangle$ such that $\sigma_1(\alpha\transn)\sigma_2=\sigma$ and $\markn_i > \markn'_i$.
We apply the same reasoning as in the above induction step. Let thus $\hat{\mathbf{v}}\in\{0,1,\nearrow,\searrow\}^n$ be arbitrary such that $\markn\in\mu(\hat{\mathbf{v}})$.
By Lemma~\ref{lem:mark_equivalence}, $\enab(\transn, \langle\hat{\mathbf{v}}\rangle) \geq 0.5$.
$\forall i\in I, \hat{\mathbf{z}}_i \in\{\nearrow,\searrow\}$ gives us for any $j\in\{1,\dots,n\}$, $\hat{\mathbf{z}}_j\in\mathbb{B}\Longrightarrow \hat{\mathbf{z}}_j=\hat{\mathbf{v}}_j$, allowing us to apply Lemma~\ref{lem:enab_monotony} to get $\enab(\transn, \langle\hat{\mathbf{z}}\rangle) \geq 0.5$.
And finally, by Lemma~\ref{lem:enab_transfers_to_bn}, $\hat{\mathbf{z}} \xrightarrow{\hat{mp}} \hat{\mathbf{w}}$ where $\hat{\mathbf{w}}_i = \nearrow$.

Since $\langle\hat{\mathbf{z}}\rangle=\langle\hat{\mathbf{w}}\rangle$, we can repeat the above for every such variable until we reach $\hat{\mathbf{z}}'$ such that for each variable $i\in\{1,\dots,n\}$, either $\hat{\mathbf{z}}'_i\in\mathbb{B}$ and $\hat{\mathbf{z}}'_i=\mathbf{y}_i$ or $\hat{\mathbf{z}}'_i\in\{\nearrow,\searrow\}$ and $\hat{\mathbf{z}}'_i=\nearrow \Longleftrightarrow \mathbf{y}_i=1$.
All that's left is thus to collapse all the variables into the corresponding Boolean values, giving us the coveted $\mathbf{x}\overset{\hat{mp}}{\longrightarrow^*}\hat{\mathbf{z}}\overset{\hat{mp}}{\longrightarrow^*}\hat{\mathbf{z}}'\overset{\hat{mp}}{\longrightarrow^*} \mathbf{y}$.
\end{proof}

By Theorem~\ref{thm:cpn_simulate_mpbn}, the net from Figure~\ref{fig:pn_encoding} instantiated as a CPN with the initial marking $\markn_0=(1,1,1,0,0,0)=\langle(0,0,0)\rangle$ recovers exactly the reachability graph given in Figure~\ref{fig:mp_ts}.
We turn to the symbolic analysis methods to visualize this.
We omit the ARG due to scale (the graph has $27$ vertices) and instead show an excerpt of the SRT in Figure~\ref{fig:re_srt}.

\begin{figure}
    \centering
    \includegraphics[width=\textwidth]{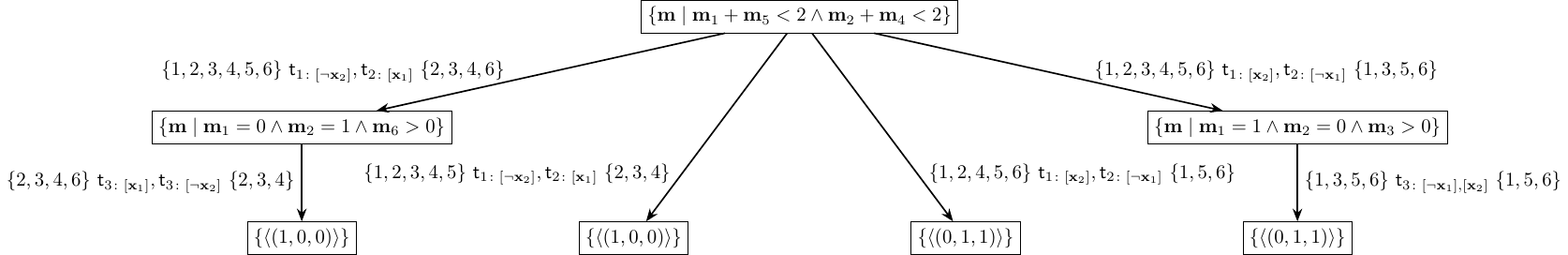}
    \caption{An excerpt of the SRT for the CPN encoding of the example BN (Figure~\ref{fig:pn_encoding}). The remaining $14$ branches, each isomorphic to one of the listed branches with respect to vertex labels but with smaller supports, have been pruned for readability.}
    \label{fig:re_srt}
\end{figure}

The SRT is a symbolic representation of the reachable configurations in Figure~\ref{fig:mp_ts}.
All configurations $\mathbf{x}\in\mathbb{B}^3$ such that $\mathbf{x}_1 = \mathbf{x}_2$ can update to any other configuration and their associated markings, $\langle\mathbf{x}\rangle$, belong to the root of the SRT.

The leaves of the SRT are the fixed points of the MPBN.
The same fixed point appears in multiple leaves due to different ways the system can reach it.
The two arrows in the middle reach the attractors directly, in case the variable $3$ already has the \enquote{correct} value.
The branches on the far left and far right then cross over an intermediate vertex.
On the left, this vertex represents the configuration $(1,0,1)$ and on the right $(0,1,0)$.

As seen in Figure~\ref{fig:mp_ts}, the configurations $(1,0,1)$ and $(0,1,0)$ constitute the strong basins of attraction of the respective fixed points.
The SRT thus captures not only the attractors themselves, but also the system \enquote{deciding} on the long-term behaviour by leaving the weak basin of attraction of some attractor.

\section{Conclusion and Outlook}
We have demonstrated that the PN encoding of BNs which has been previously introduced to simulate fully asynchronous semantics as a DPN, can equally be used to simulate the much richer, MP semantics as a CPN.
The proof is educational, showing us that finite reachability is not enough to capture simultaneous variable updates and lim-reachability is required to reproduce all behaviours exhibited by the MPBN.

The ability to capture MPBNs by CPNs naturally allows MPBNs to benefit from all results available for CPNs.
Similarly to MPBNs, CPN reachability is polynomial, a result which is not limited to locally monotonic Boolean functions.
We further exhibited that tools for symbolic analysis of CPNs recapitulate the reachable configurations of MPBNs with special focus on \enquote{key events} which shape the attractor landscape.

We believe that the presented results serve both to open up new avenues for studying and understanding MPBNs, and motivate the further study of the \enquote{abstraction scale} of dynamical models of interaction.

\newpage
\bibliographystyle{abbrv}
\bibliography{references}

\end{document}